\documentclass[11pt, a4paper, reqno]{amsart}

\usepackage{anysize}
\usepackage[utf8]{inputenc}
\usepackage{amsmath}
\usepackage{mathrsfs} 
\usepackage{amsthm}
\usepackage{mathtools}
\usepackage{thmtools}
\usepackage{thm-restate}
\usepackage{amssymb}
\usepackage{enumerate}
\usepackage{tcolorbox}
\usepackage{bm}
\usepackage{complexity}

\usepackage{hyperref}

\marginsize{2.5cm}{2.5cm}{3cm}{3cm}

\newtheorem{theorem}{Theorem}
\numberwithin{theorem}{section}
\newtheorem{corollary}[theorem]{Corollary}
\newtheorem{proposition}[theorem]{Proposition}

\newtheorem{lemma}[theorem]{Lemma}
\numberwithin{lemma}{section}
\newtheorem{observation}[theorem]{Observation}
\numberwithin{observation}{section}

\newenvironment{customthm}[1]
  {\innercustomthm}
  {\endinnercustomthm}

\newenvironment{customcor}[1]
  {\innercustomcor}
  {\endinnercustomcor}

\newenvironment{customconj}[1]
  {\innercustomconj}
  {\endinnercustomconj}

\theoremstyle{definition}
\newtheorem{definition}{Definition}
\numberwithin{definition}{section}

\AfterEndEnvironment{definition}{\noindent\ignorespaces}
\AfterEndEnvironment{lemma}{\noindent\ignorespaces}
\AfterEndEnvironment{proof}{\noindent\ignorespaces}
\AfterEndEnvironment{corollary}{\noindent\ignorespaces}
\AfterEndEnvironment{proposition}{\noindent\ignorespaces}
\AfterEndEnvironment{theorem}{\noindent\ignorespaces}
\AfterEndEnvironment{innercustomthm}{\noindent\ignorespaces}

\newcommand{\NN}{\mathbb{N}}			
\newcommand{\RR}{\mathscr{R}}			
\renewcommand{\PPP}{\mathscr{P}}			

\renewcommand{\C}{\mathcal{C}}			
\renewcommand{\L}{\mathcal{L}}			
\newcommand{\words}[1]{\langle #1 \rangle}	

\newcommand{\f}{\mathsf{f}}				
\renewcommand{\poly}{\mathsf{poly}}		
\newcommand{\quasipoly}{\mathsf{qpoly}}	
\renewcommand{\exp}{\mathsf{exp}}			
\newcommand{\dtime}{\mathtt{dtime}}		
\newcommand{\ntime}{\mathtt{ntime}}		
\newcommand{\dspace}{\mathtt{dspace}}		

\renewcommand{\W}{\mathsf{W}}
\newcommand{\U}{\mathsf{U}}

\newcommand{\defeq}{\coloneqq}			
\newcommand{\eref}[1]{Eq.~(\ref{#1})}		
\newcommand{\sref}[1]{Section~\ref{#1}}		
\newcommand{\lemref}[1]{Lemma~\ref{#1}}	
\newcommand{\propref}[1]{Proposition~\ref{#1}}	
\newcommand{\thmref}[1]{Theorem~\ref{#1}}	
\newcommand{\corref}[1]{Corollary~\ref{#1}}	
\newcommand{\conjref}[1]{Conjecture~\ref{#1}}
\newcommand{\obsref}[1]{Observation~\ref{#1}}
\newcommand{\RTIME}{\mathsf{RTIME}}
\newcommand{\UTIME}{\mathsf{UTIME}}
\newcommand{\PTIME}{\mathsf{PTIME}}
\renewcommand{\ae}{\ \mathrm{a.e.}}

\newcommand{\id}{\mathrm{id}}


\begin{document}

\title[A Refinement of the McCreight-Meyer Union Theorem]{A Refinement of the McCreight-Meyer Union Theorem}
\author{Matthew Fox}
\address{Department of Physics, University of Colorado, Boulder, CO}
\email{matthew.fox@colorado.edu}

\author{Chaitanya Karamchedu}
\address{Department of Computer Science, University of Maryland, College Park, MD}
\email{cdkaram@umd.edu}

\begin{abstract}



Using properties of Blum complexity measures and certain complexity class operators, we exhibit a total computable and non-decreasing function $t_\poly$ such that for all $k$, $\Sigma_k\P = \Sigma_k\TIME(t_\poly)$, $\BPP = \BPTIME(t_\poly)$, $\RP = \RTIME(t_\poly)$, $\UP = \UTIME(t_\poly)$, $\PP = \PTIME(t_\poly)$, $\Mod_k\P = \Mod_k\TIME(t_\poly)$, $\PSPACE = \DSPACE(t_\poly)$, and so forth. A similar statement holds for any collection of language classes, provided that each class is definable by applying a certain complexity class operator to some Blum complexity class.

\end{abstract}

\maketitle

\section{Introduction}
\label{sec:intro}

Just two years after Blum \cite{blum} devised his eponymous axioms on complexity measures, McCreight and Meyer \cite{mccreightmeyer} proved that every such measure satisfies a ``union theorem''. For deterministic time, non-deterministic time, and deterministic space, all of which are Blum measures, the McCreight-Meyer Union Theorem exhibits three total computable (but non-constructible) functions $t_1, t_2$, and $t_3$ such that $\P = \DTIME(t_1)$, $\NP = \NTIME(t_2)$, and $\PSPACE = \DSPACE(t_3)$. However, since McCreight and Meyer's proof sensitively depends on the complexity measure in question, it is not at all obvious how $t_1$, $t_2$, and $t_3$ relate.

In this paper, we refine McCreight and Meyer's result and exhibit a total computable and non-decreasing function $t_\poly$ such that for all $k$,
\begin{align*}
\,\,\,\Sigma_k\P = \Sigma_k\TIME(t_\poly), \quad \BPP = \BPTIME(t_\poly), \quad \RP = \RTIME(t_\poly),\quad \UP = \UTIME(t_\poly),  \quad \quad \;\;\;\\
\PP = \PTIME(t_\poly), \quad \Mod_k\P = \Mod_k\TIME(t_\poly), \quad \text{and} \quad \PSPACE = \DSPACE(t_\poly).\quad\quad\quad\quad
\end{align*}
Apparently, $t_\poly$ is inextricably tied to the polynomial-boundedness of all these classes while simultaneously agnostic to the exact underlying computational model. Unfortunately, this suggests that a better understanding of $t_\poly$ will probably not help resolve the outstanding relationships between these classes.


The above result follows in two elementary steps. First, we show that for any finite collection of Blum measures $\Phi_1, \dots, \Phi_\ell$, which define Blum complexity classes $\C_{\Phi_1}, \dots, \C_{\Phi_\ell}$, respectively, there is a total computable and non-decreasing function $t_\poly$ such that for all $i \in [\ell]$, $\C_{\Phi_i}(t_\poly) = \bigcup_{f \in \poly} \C_{\Phi_i}(f),$ where $\poly$ is the set of polynomial functions. Among other things, this establishes that $\P = \DTIME(t_\poly)$, $\NP = \NTIME(t_\poly)$, and $\PSPACE = \DSPACE(t_\poly)$. Next, we exploit the fact that classes like $\Sigma_k\P, \BPP, \UP$ and so forth are definable by applying a particular complexity class operator to $\P$. For example, $\Sigma_k\P = \Sigma_k^\poly \cdot \P$. Thus, by the first step, $\Sigma_k\P = \Sigma_k^\poly \cdot \DTIME(t_\poly)$, and a final elementary argument, which crucially relies on the fact that $t_\poly$ is non-decreasing, establishes $\Sigma_k^\poly \cdot \DTIME(t_\poly) = \Sigma_k\TIME(t_\poly)$, as desired.

A more intricate argument works for any (potentially infinite) collection of language classes, provided that each class in the collection is definable by applying a certain complexity class operator to some Blum complexity class. This more general result is the context of our main theorem. We regard it as evidence for a plausible conjecture that was raised to us by Fortnow:

\begin{customconj}{A}[Fortnow \cite{fortnow3}]
\label{conj:mainconj}
Let $\{\Phi_i \mid i \in \NN \}$ be a partial recursive list of Blum measures and let $\f = \{f_i \mid i \in \NN\}$ be a total recursive list of functions such that if $i \leq j$, then $f_i(n) \leq f_j(n)$ for almost all $n$. There exists a total computable function $t_\f$ such that for all $i$, $\C_{\Phi_i}(t_\f) = \bigcup_{f \in \f} \C_{\Phi_i}(f).$
\end{customconj}

\section{Preliminaries}

In this paper, $\Sigma$ is a non-unary alphabet, $\Sigma^n$ is the set of all $\Sigma$-strings with length $n$, $\Sigma^*$ is the set of all $\Sigma$-strings with finite length, $|x|$ is the length of the $\Sigma$-string $x$, $[\ell]$ is the set $\{1,2,\dots, \ell\}$, $\#X$ is the cardinality of the set $X$, $\id : n \mapsto n$ is the identity function on the non-negative integers $\NN$, $\chi_L : \Sigma^* \rightarrow \{0,1\}$ is the characteristic function of the language $L \subseteq \Sigma^*$, $\equiv_k$ is congruence modulo $k$, $\PPP(X_1 \times \dots \times X_k \rightarrow Y)$ is the set of all $k$-ary partial computable (p.c.) functions from $X_1 \times \cdots \times X_k$ to $Y$, where each $X_i$ and $Y$ is either $\Sigma^*$, $\NN$, or $\{0,1\}$, and $\RR(X_1 \times \dots \times X_k \rightarrow Y)$ is the set of all $k$-ary total computable functions from $X_1 \times \cdots \times X_k$ to $Y$. Given $f \in \PPP(X_1 \times \dots \times X_k \rightarrow Y)$, we write $f(x_1, \dots, x_k)\downarrow$ if and only if (iff) $(x_1, \dots, x_k)$ is in the domain of $f$.

We say $f \in \RR(\NN \rightarrow \NN)$ is \emph{non-decreasing} iff $m \leq n$ implies $f(m) \leq f(n)$. Given $f, g \in \RR(\NN \rightarrow \NN)$, by a statement like ``$f \leq g$'', we mean ``$f(n) \leq g(n)$ almost everywhere (a.e.)'', i.e., ``$f(n) \leq g(n)$ for all but finitely many $n$''. Similarly, given $f \in \PPP(\Sigma^* \rightarrow \NN)$, $g \in \PPP(\NN \times \Sigma^* \rightarrow \NN)$, and $h \in \RR(\NN \rightarrow \NN)$, by statements like ``$f(x) \leq h(|x|) \ae$'' and ``$g(e, x) \leq h(|x|) \ae$'', we mean ``$f(x) \leq h(|x|)$ for all but finitely many $x$'' and ``for this particular choice of $e$, $g(e, x) \leq h(|x|)$ for all but finitely many $x$'', respectively. 

We employ any standard pairing function $\words{\cdot, \cdot} \in \RR(\Sigma^* \times \Sigma^* \rightarrow \Sigma^*)$ that satisfies $|x_1| + |x_2| \leq |\words{x_1, x_2}| \leq |x_1| + |x_2| + O(\log |x_1|)$. Of course, this induces a $k$-ary pairing function given by 
$$
\words{x_1, \dots, x_k} \defeq \words{x_1, \words{x_2, \dots, \words{x_{k-1}, x_k}}},
$$
which satisfies
\begin{equation}
\sum_{i = 1}^k |x_i| \leq |\words{x_1, \dots, x_k}| \leq \sum_{i = 1}^k |x_i| + O\left(\sum_{i = 1}^{k - 1} \log |x_i|\right).
\label{eq:pairingfunctionsize}
\end{equation}
Also, if $\epsilon$ is the empty string in $\Sigma^*$, then every string $x$ gets encoded as $\words{x} \defeq \words{x, \epsilon}$, which satisfies $|x| \leq |\words{x}| \leq |x| + O(\log |x|).$


\subsection{Acceptable Collections} Our results apply only for particular sets of functions which we call \emph{acceptable collections}.
\begin{definition}
A countable set of functions $\f = \{f_i \mid i \in \NN\}$ is an \emph{acceptable collection} iff:
\begin{enumerate}[(a)]
\item $\f$ is total recursive, i.e., there is $F \in \mathscr{R}(\NN \times \NN \rightarrow \NN)$ such that for all $i$ and $n$, $F(i,n) = f_i(n)$,\footnote{Therefore, $f_i \in \RR(\NN \rightarrow \NN)$ for every $i$.}
\item for all $i$ and $j$, $i \leq j$ implies $f_{i} \leq f_{j}$,
\item for all $i$, $f_i$ is non-decreasing,
\item there exists $i$ such that $f_i$ is unbounded,
\item for all $i, j_1, \dots, j_\ell$, there exists $k$ such that 
\begin{equation}
f_i \circ \left(\id + \sum_{m = 1}^\ell f_{j_m} + O\left(\log + \sum_{m = 1}^{\ell - 1} \log \circ f_{j_m}\right)\right) \leq f_k.
\label{eq:property6}
\end{equation}
\end{enumerate}
\end{definition}

Examples of acceptable collections include the logarithmic functions $\mathsf{log} \defeq \{i\log(n) \mid i \in \NN\}$, the polylogarithmic functions $\polylog \defeq \{\log^i(n) \mid i \in \NN\}$, the polynomial functions $\poly \defeq \{n^i \mid i \in \NN\}$, and the quasi-polynomial functions $\quasipoly \defeq \{2^{\log^i(n)} \mid i \in \NN\}$. Thus, many natural sets of functions are acceptable collections. That said, the exponential functions $\exp \defeq \{2^{n^i} \mid i \in \NN\}$ are \emph{not} an acceptable collection as they violate property (e). It remains open, therefore, if our results apply to $\exp$.

Altogether, properties (a) and (b) are necessary for the McCreight-Meyer Union Theorem to hold (\sref{sec:MMUnionTheorem}). The additional properties (c), (d), and (e) are used to prove our main results, but it is unclear if they are actually necessary. We note that the unusual property (e) is ultimately forced by our choice of pairing function (c.f. Eqs.~(\ref{eq:pairingfunctionsize}) and (\ref{eq:property6})).

\subsection{The Blum Axioms} We now recall the Blum axioms \cite{blum}.
\begin{definition}
Let $\{\varphi_e \mid e \in \NN\}$ be an acceptable G\"odel numbering of $\PPP(\Sigma^* \rightarrow \Sigma^*)$. We call $\Phi \in \PPP(\NN \times \Sigma^* \rightarrow \NN)$ a \emph{Blum complexity measure} (or \emph{Blum measure} for short) iff:
\begin{enumerate}[(i)]
\item for all $e$ and $x$, $\Phi(e, x)\downarrow$ iff $\varphi_e(x)\downarrow$,
\item there exists $f \in \RR(\NN \times \Sigma^* \times \NN \rightarrow \NN)$ such that for all $e$, $x$, and $m$,
$$
f(e,x,m) = 
\begin{cases}
1 &\text{if $\Phi(e, x) = m$},\\
0 &\text{otherwise}.
\end{cases}
$$
\end{enumerate}
Together, every $t \in \RR(\NN \rightarrow \NN)$ and Blum measure $\Phi$ define a \emph{Blum complexity class} (or \emph{Blum class} for short) given by
$$
\C_{\Phi}(t) \defeq \left\{\varphi_e \mid \Phi(e, x) \leq t(|x|)\ae\right\}.
$$
\end{definition}

For example, the number of steps single-tape, deterministic Turing machines take on a given input defines the Blum measure $\dtime \in \PPP(\NN \times \Sigma^* \rightarrow \NN)$, which stands for \emph{deterministic time}. In particular, for the $e$th p.c. function $\varphi_e$, $\dtime$ is such that for every $x$, $\dtime(e, x) = m$ iff there exists a deterministic Turing machine $M$ such that $M(x) = \varphi_e(x)$ and $M$ halts in exactly $m$ steps. Other examples include $\dspace$ (deterministic space) and $\ntime$ (non-deterministic time), which are defined like $\dtime$ but with the obvious modification.

Besides a few exceptional statements, in this paper we shall mostly reason at the level of language classes. Indeed, every Blum class $\C_{\Phi}$ defines a \emph{Blum language class} $\L_{\Phi}$, which is got by considering those languages whose characteristic functions are in $\C_{\Phi}$:
$$
\L_{\Phi}(t) \defeq \left\{L \subseteq \Sigma^* \mid \chi_L \in \C_{\Phi}(t)\right\}.
$$
For example, $\L_\dtime(t) = \DTIME(t)$, $\L_\ntime(t) = \NTIME(t)$, and $\L_\dspace(t) = \DSPACE(t)$.

\subsection{The McCreight-Meyer Union Theorem}
\label{sec:MMUnionTheorem}

In addition to satisfying a speedup theorem \cite{blum}, a compression theorem \cite{blum}, and a gap theorem \cite{borodin, trakhtenbrot}, every Blum measure also satisfies a union theorem.

\begin{theorem}[McCreight-Meyer Union Theorem \cite{mccreightmeyer}]
\label{thm:MMTheorem}
Let $\Phi$ be a Blum measure and let $\f = \{f_i \mid i \in \NN\}$ be a collection of functions that at least satisfies conditions (a) - (d) in the definition of an acceptable collection. There exists $t_\f \in \RR(\NN \rightarrow \NN)$ such that:
\begin{enumerate}[(i)]
\item $t_\f$ is non-decreasing,
\item for all $f \in \f$, $f \leq t_\f$,
\item $\C_\Phi(t_\f) = \bigcup_{f \in \f} \C_\Phi(f).$
\end{enumerate}
\end{theorem}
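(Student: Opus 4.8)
The plan is to construct $t_\f$ by an explicit stage-by-stage diagonalization against all programs at once, following the cancellation technique of McCreight and Meyer. At stage $n$ I define the single value $t_\f(n)$. The resulting function is total computable by condition~(a)---the map $F(i,n)=f_i(n)$ is total recursive---together with Blum axiom~(ii), which renders the test ``$\Phi(e,x)\le f_i(n)$'' decidable, since it amounts to checking whether $\Phi(e,x)=m$ for one of the finitely many $m\le f_i(n)$. Throughout the construction I would maintain, for each program $\varphi_e$ activated so far, a current \emph{index demand} $d(e)$, a running guess for the least $j$ with $\Phi(e,\cdot)\le f_j$, and I would form $t_\f(n)$ from the values $f_{d(e)}(n)$ of the currently trusted programs. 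Whenever a trusted program is caught violating its demand at the current length, that is, $\Phi(e,x)>f_{d(e)}(|x|)$ for some $|x|=n$, I would cancel the least such program and promote its demand, $d(e)\mapsto d(e)+1$, allowing it to re-enter later at the larger demand.

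Given such a $t_\f$, properties~(i) and~(ii) and the inclusion $\bigcup_{f\in\f}\C_\Phi(f)\subseteq\C_\Phi(t_\f)$ are the routine part. Property~(c) (each $f_i$ is non-decreasing) together with~(b) (monotonicity in the index) is what lets me arrange the construction so that $t_\f$ comes out non-decreasing, giving~(i). For property~(ii), the crucial fact---established as part of the analysis---is that the demands in play grow without bound, so that for every $j$ a program of demand at least $j$ is eventually trusted at all large lengths; by~(b) this forces $f\le t_\f$ for every $f\in\f$. The easy inclusion is then immediate: if $\varphi_e\in\C_\Phi(f_j)$ then $\Phi(e,x)\le f_j(|x|)$ almost everywhere, and combined with $f_j\le t_\f$ this gives $\Phi(e,x)\le t_\f(|x|)$ almost everywhere, i.e.\ $\varphi_e\in\C_\Phi(t_\f)$. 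Condition~(d) enters to guarantee the schedule of demands is genuinely unbounded, so the demands in play do not stall below some fixed value.

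The hard part, and the step I expect to be the main obstacle, is the reverse inclusion $\C_\Phi(t_\f)\subseteq\bigcup_{f\in\f}\C_\Phi(f)$: I must show that $t_\f$ is not too generous, i.e.\ that every program $\varphi_e\notin\bigcup_j\C_\Phi(f_j)$---equivalently, every $e$ whose complexity overshoots $f_j$ infinitely often for each fixed $j$---also has $\Phi(e,x)>t_\f(|x|)$ infinitely often, and hence $\varphi_e\notin\C_\Phi(t_\f)$. This is delicate precisely because it pulls against domination: forcing $f_j\le t_\f$ for every $j$ requires the demands in play to tend to infinity, whereas exposing such a ``bad'' $\varphi_e$ requires $t_\f(n)\le f_i(n)$ on some length for each index $i$ at which $\varphi_e$ overshoots $f_i$. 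Reconciling these is the whole purpose of the cancellation schedule: a bad program has its demand promoted without bound (it is cancelled again and again), and I must arrange the bookkeeping so that each promotion is triggered by an input that genuinely exceeds the value $t_\f$ takes there; so that every program is revisited at arbitrarily high demands; and so that no program whose complexity is actually bounded by some $f_j$ can stall the ascent of the demands in play. Verifying that this single schedule simultaneously drives the demands to infinity and traps every bad program at infinitely many lengths---all while keeping $t_\f$ total, computable, and non-decreasing---is the heart of the argument, and is where conditions~(a)--(d) are jointly brought to bear.
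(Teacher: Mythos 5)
First, note that the paper does not actually prove this theorem: it is imported wholesale from McCreight and Meyer (Theorem 5.5 together with Remark 5.13 of \cite{mccreightmeyer}; see also Theorem J.3 of \cite{kozen}), so there is no in-paper proof to compare yours against. Your sketch does follow the standard cancellation strategy of the original argument---maintaining per-program index demands, forming $t_\f(n)$ as a minimum over the currently relevant values $f_{d(e)}(n)$, and promoting the demand of any program caught violating it---so the overall approach is the right one.

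That said, as a proof it has a genuine gap, and you say so yourself: essentially the entire content of the theorem lives in the two verifications you defer. (1) The reverse inclusion $\C_\Phi(t_\f)\subseteq\bigcup_{f\in\f}\C_\Phi(f)$ is only described as a goal; you never exhibit the bookkeeping that simultaneously traps every bad program at infinitely many lengths and drives the demands in play to infinity. Two concrete points your description leaves unresolved, each of which can actually break the construction: you must specify the \emph{initial} demand of a program when it first enters the construction (if program $e$ enters at stage $e$ with demand $0$ and is immediately violated, then $t_\f(n)\le f_0(n)$ for infinitely many $n$ and property (ii) fails; the standard fix is to initialize, or reset upon cancellation, the demand to the current stage number rather than incrementing by $1$); and you must handle the fact that condition (b) only gives $f_i\le f_j$ \emph{almost everywhere}, with a threshold depending on the pair $(i,j)$, which is not automatically uniform over the unboundedly many demand values in play at a given stage. (2) Property (i), that $t_\f$ is non-decreasing, does not follow from the observation that each $f_i$ is non-decreasing together with (b): $t_\f(n)$ is a minimum over a \emph{stage-dependent} set of indices, and that set can acquire a small index at stage $n+1$ that was absent at stage $n$, so the minimum can drop. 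Monotonizing after the fact by $\max_{m\le n}t_\f(m)$ preserves (ii) but can destroy the hard inclusion in (iii), since enlarging $t_\f$ enlarges $\C_\Phi(t_\f)$. Securing (i) without losing (iii) is precisely where conditions (c) and (d) are brought to bear in Remark 5.13 of \cite{mccreightmeyer}, and it requires an explicit argument rather than the one-sentence appeal you give.
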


If $\f$ only satisfies conditions (a) and (b) in the definition of an acceptable collection, then there still exists $t_\f \in \RR(\NN \rightarrow \NN)$ such that statements (ii) and (iii) hold in \thmref{thm:MMTheorem}. This fact is the usual statement of the McCreight-Meyer Union Theorem (Theorem 5.5 in \cite{mccreightmeyer}, see also Theorem J.3 in \cite{kozen} for a more modern treatment). However, in this paper it is vital that $t_\f$ is non-decreasing, which is the case if $\f$ also satisfies conditions (c) and (d) in the definition of an acceptable collection. For more on this, we refer the reader to Remark 5.13 in \cite{mccreightmeyer}.

A fascinating corollary of \thmref{thm:MMTheorem} is the following.

\begin{corollary}
\label{cor:intcorollary1}
There exist non-decreasing $t_1, t_2, t_3 \in \RR(\NN \rightarrow \NN)$ such that:
\begin{enumerate}[(i)]
\item $\P = \DTIME(t_1)$,
\item $\NP = \NTIME(t_2)$,
\item $\PSPACE = \DSPACE(t_3)$.
\end{enumerate}
\end{corollary}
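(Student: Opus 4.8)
The plan is to invoke the McCreight–Meyer Union Theorem (\thmref{thm:MMTheorem}) three separate times, once for each of the Blum measures $\dtime$, $\ntime$, and $\dspace$, always with the acceptable collection $\f = \poly = \{n^i \mid i \in \NN\}$ of polynomial functions. Since $\poly$ is an acceptable collection, it in particular satisfies conditions (a)--(d), so the theorem applies to each measure and produces a function $t \in \RR(\NN \to \NN)$ satisfying $\C_\Phi(t) = \bigcup_{f \in \poly} \C_\Phi(f)$. Crucially, because $\poly$ meets conditions (c) and (d) (and not merely (a) and (b)), statement (i) of the theorem guarantees that this $t$ is \emph{non-decreasing}; this is exactly the extra property the corollary advertises, and it is the only genuinely non-trivial feature we extract. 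I would then define $t_1$, $t_2$, $t_3$ to be the functions so obtained for $\dtime$, $\ntime$, and $\dspace$, respectively.

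The next step is to transport the identity $\C_\Phi(t) = \bigcup_{f \in \poly} \C_\Phi(f)$, which lives at the level of partial computable functions, down to the level of language classes. Using the definition $\L_\Phi(t) = \{L \mid \chi_L \in \C_\Phi(t)\}$, I would observe that membership of a characteristic function in a union of sets is the same as its membership in one of them, so that
\[
\L_\Phi(t) = \{L \mid \chi_L \in \bigcup_{f \in \poly} \C_\Phi(f)\} = \bigcup_{f \in \poly} \{L \mid \chi_L \in \C_\Phi(f)\} = \bigcup_{f \in \poly} \L_\Phi(f).
\]
Specializing to $\Phi \in \{\dtime, \ntime, \dspace\}$ and recalling $\L_\dtime = \DTIME$, $\L_\ntime = \NTIME$, and $\L_\dspace = \DSPACE$ then yields $\DTIME(t_1) = \bigcup_{f \in \poly} \DTIME(f)$, and likewise for the other two measures.

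Finally, I would close by identifying $\bigcup_{f \in \poly} \DTIME(f) = \P$, $\bigcup_{f \in \poly} \NTIME(f) = \NP$, and $\bigcup_{f \in \poly} \DSPACE(f) = \PSPACE$, which are nothing more than the standard definitions of these classes once one notes that $\{n^i \mid i \in \NN\}$ enumerates all polynomial bounds. Combining this with the previous paragraph gives the three claimed equalities. I do not anticipate any real obstacle here, since the entire mathematical weight rests on \thmref{thm:MMTheorem}; the only points requiring care are confirming that $\poly$ satisfies (a)--(d) (which is exactly the assertion that $\poly$ is an acceptable collection) and carrying out the routine bookkeeping of the function-class-to-language-class translation displayed above.
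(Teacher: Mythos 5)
Your proposal is correct and matches the paper's (implicit) argument exactly: the corollary is stated as an immediate consequence of \thmref{thm:MMTheorem} applied separately to $\dtime$, $\ntime$, and $\dspace$ with $\f = \poly$, with the non-decreasing property coming from conditions (c) and (d). The only addition you make is spelling out the routine function-class-to-language-class translation, which the paper leaves unstated.
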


We now prove a proposition that has many interesting corollaries.

\begin{proposition}
Let $\{\Phi_i \mid i \in [\ell] \}$ be a finite set of Blum measures and let $\f$ be a collection of functions that at least satisfies conditions (a) - (d) in the definition of an acceptable collection. There exists $t_\f \in \RR(\NN \rightarrow \NN)$ such that:
\begin{enumerate}[(i)]
\item $t_\f$ is non-decreasing,
\item for all $f \in \f$, $f \leq t_\f$,
\item for all $i \in [\ell]$, $\C_{\Phi_i}(t_\f) = \bigcup_{f \in \f}\C_{\Phi_i}(f)$.
\end{enumerate}
\label{prop:mainprop1}
\end{proposition}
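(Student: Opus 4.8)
The plan is to reduce \propref{prop:mainprop1} to the single-measure Union Theorem (\thmref{thm:MMTheorem}) by applying it once to each $\Phi_i$ and then gluing the resulting threshold functions together with a pointwise minimum. Concretely, I would first invoke \thmref{thm:MMTheorem} with the collection $\f$ and each Blum measure $\Phi_i$ in turn. Since $\f$ satisfies conditions (a)--(d), this yields, for every $i \in [\ell]$, a non-decreasing $t^{(i)} \in \RR(\NN \rightarrow \NN)$ such that $f \leq t^{(i)} \ae$ for all $f \in \f$ and $\C_{\Phi_i}(t^{(i)}) = \bigcup_{f \in \f}\C_{\Phi_i}(f)$.

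Next I would set $t_\f \defeq \min_{i \in [\ell]} t^{(i)}$, the pointwise minimum. Because $[\ell]$ is finite and each $t^{(i)}$ is total computable, $t_\f \in \RR(\NN \rightarrow \NN)$; and since the pointwise minimum of finitely many non-decreasing functions is again non-decreasing, $t_\f$ satisfies claim (i). For claim (ii), fix $f \in \f$. For each $i$ there is a threshold $N_i$ beyond which $f(n) \leq t^{(i)}(n)$, so taking $N \defeq \max_{i \in [\ell]} N_i$ (finite, as $\ell$ is finite) gives $f(n) \leq \min_i t^{(i)}(n) = t_\f(n)$ for all $n \geq N$, i.e. $f \leq t_\f \ae$.

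Finally I would verify claim (iii) for each fixed $i \in [\ell]$ by splitting it into two inclusions, both resting on the monotonicity of $t \mapsto \C_{\Phi_i}(t)$ (namely $g \leq h \ae$ implies $\C_{\Phi_i}(g) \subseteq \C_{\Phi_i}(h)$, since a uniform bound below an almost-everywhere-smaller function is still almost everywhere below the larger one). The inclusion $\bigcup_{f \in \f}\C_{\Phi_i}(f) \subseteq \C_{\Phi_i}(t_\f)$ follows from claim (ii): if $\Phi_i(e,x) \leq f(|x|) \ae$ and $f \leq t_\f \ae$, then $\Phi_i(e,x) \leq t_\f(|x|) \ae$. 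For the reverse inclusion, note that $t_\f \leq t^{(i)}$ holds pointwise by construction, so $\C_{\Phi_i}(t_\f) \subseteq \C_{\Phi_i}(t^{(i)}) = \bigcup_{f \in \f}\C_{\Phi_i}(f)$, the last equality being exactly \thmref{thm:MMTheorem} for $\Phi_i$. Combining the two inclusions yields claim (iii).

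I expect the only genuinely delicate point to be the choice of the minimum rather than the maximum. Claim (ii) pushes $t_\f$ to be large, since it must dominate every $f \in \f$, while the hard inclusion in claim (iii) pushes $t_\f$ to be small, since it must not exceed any individual $t^{(i)}$ (otherwise a function counted by $\C_{\Phi_i}(t_\f)$ could escape $\C_{\Phi_i}(t^{(i)})$). The pointwise minimum is precisely the choice reconciling both demands, and it is the finiteness of $[\ell]$ that lets the almost-everywhere bound in claim (ii) survive the minimum. No diagonalization beyond that already encapsulated in \thmref{thm:MMTheorem} should be required.
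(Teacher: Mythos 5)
Your proposal is correct and follows essentially the same route as the paper: apply \thmref{thm:MMTheorem} once per measure, take the pointwise minimum of the resulting non-decreasing bounds, and use the finiteness of $[\ell]$ to push the almost-everywhere domination of each $f \in \f$ through the minimum. The two inclusions in (iii) are argued exactly as in the paper's proof.
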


\begin{proof}
By \thmref{thm:MMTheorem}, for all $i \in [\ell]$, there is a non-decreasing function $t_{\f, i} \in \RR(\NN \rightarrow \NN)$ such that $\C_{\Phi_i}(t_{\f, i}) = \bigcup_{f \in \f} \C_{\Phi_i}(f)$ and for all $f \in \f$, $f \leq t_{\f,i}$. For every $n$, define $t_\f(n) \defeq \min_{i \in [\ell]} t_{\f, i}(n)$. Evidently, $t_\f$ is total computable because the map $(i,n) \mapsto t_{\f, i}(n)$ is, and $t_\f$ is non-decreasing because every $t_{\f,i}$ is. Additionally, for all $i \in [\ell]$, $t_\f(n) \leq t_{\f, i}(n)$ for all $n$. Therefore, $\C_{\Phi_i}(t_\f) \subseteq \C_{\Phi_i}(t_{\f, i}).$ Conversely, for all $i \in [\ell]$ and all $f \in \f$, $f \leq t_{\f,i}$, i.e., there is $N_{f,i} \in \NN$ such that $f(n) \leq t_{\f, i}(n)$ for all $n > N_{f,i}$. Therefore, $f(n) \leq \min_{i \in [\ell]} t_{\f, i}(n) = t_\f(n)$ for all $n > \max_{i \in [\ell]} N_{f,i}$. Consequently, $f \leq t_\f$, so also $\C_{\Phi_i}(t_{\f, i}) \subseteq \C_{\Phi_i}(t_\f)$.
\end{proof}

An immediate corollary of \propref{prop:mainprop1} is the following, which is a considerably stronger version of \corref{cor:intcorollary1}.

\begin{corollary}
\label{cor:intcorollary}
There exists a non-decreasing function $t_\poly \in \RR(\NN \rightarrow \NN)$ such that:
\begin{enumerate}[(i)]
\item $\P = \DTIME(t_\poly)$,
\item $\NP = \NTIME(t_\poly)$,
\item $\PSPACE = \DSPACE(t_\poly)$.
\end{enumerate}
\end{corollary}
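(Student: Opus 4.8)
The plan is to read this off \propref{prop:mainprop1} with the three canonical Blum measures and the collection $\poly$. First I would note that $\poly = \{n^i \mid i \in \NN\}$ is an acceptable collection, as recorded after the definition, so in particular it satisfies conditions (a)--(d). Taking $\ell = 3$ with $\Phi_1 = \dtime$, $\Phi_2 = \ntime$, and $\Phi_3 = \dspace$, \propref{prop:mainprop1} then produces a \emph{single} non-decreasing $t_\poly \in \RR(\NN \to \NN)$ for which, simultaneously,
\[
\C_{\dtime}(t_\poly) = \bigcup_{f \in \poly} \C_{\dtime}(f), \quad \C_{\ntime}(t_\poly) = \bigcup_{f \in \poly} \C_{\ntime}(f), \quad \C_{\dspace}(t_\poly) = \bigcup_{f \in \poly} \C_{\dspace}(f).
\]

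Next I would transfer these identities from the Blum (function) classes $\C_\Phi$ to the associated Blum language classes $\L_\Phi$. For any Blum measure $\Phi$, a language $L$ lies in $\bigcup_{f \in \poly} \L_\Phi(f)$ iff $\chi_L \in \C_\Phi(f)$ for some $f \in \poly$, i.e. iff $\chi_L \in \bigcup_{f \in \poly} \C_\Phi(f)$. Since passing to characteristic functions commutes with taking the union, the three displayed identities yield $\bigcup_{f \in \poly} \L_\Phi(f) = \L_\Phi(t_\poly)$ for each of $\Phi \in \{\dtime, \ntime, \dspace\}$.

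Finally I would invoke the textbook characterizations $\P = \bigcup_{f \in \poly} \DTIME(f)$, $\NP = \bigcup_{f \in \poly} \NTIME(f)$, and $\PSPACE = \bigcup_{f \in \poly} \DSPACE(f)$, together with the identifications $\L_\dtime(t) = \DTIME(t)$, $\L_\ntime(t) = \NTIME(t)$, and $\L_\dspace(t) = \DSPACE(t)$ from the preliminaries. Combining these with the previous paragraph gives $\P = \DTIME(t_\poly)$, $\NP = \NTIME(t_\poly)$, and $\PSPACE = \DSPACE(t_\poly)$ for one and the same non-decreasing $t_\poly$, which is exactly (i)--(iii). There is no genuinely hard step here: all of the force is already carried by \propref{prop:mainprop1}, whose value is precisely that it delivers a uniform function working for all three measures at once. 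The only point demanding a little care is the bookkeeping when descending from the function classes $\C_\Phi$ to the language classes $\L_\Phi$ and matching the union $\bigcup_{f \in \poly}$ against the standard definitions of $\P$, $\NP$, and $\PSPACE$; once that is settled, the corollary is immediate.
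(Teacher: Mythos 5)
Your proposal is correct and matches the paper's intent exactly: the paper presents this corollary as an immediate consequence of \propref{prop:mainprop1} applied to the three Blum measures $\dtime$, $\ntime$, $\dspace$ with the acceptable collection $\poly$, which is precisely your argument. The bookkeeping step of descending from $\C_\Phi$ to $\L_\Phi$ and matching against the standard definitions of $\P$, $\NP$, and $\PSPACE$ is handled correctly and is the only content beyond the proposition itself.
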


Ultimately, our goal in this paper is to extend \propref{prop:mainprop1} to an infinity of classes---such as every level of the polynomial hierarchy---which is our contribution toward Fortnow's \conjref{conj:mainconj}, which we restate below for convenience.

\begin{customconj}{A}[Fortnow \cite{fortnow3}]
\label{conj:mainconj2}
Let $\{\Phi_i \mid i \in \NN \}$ be a partial recursive list of Blum measures and let $\f$ be a collection of functions that at least satisfies conditions (i) and (ii) in the definition of an acceptable collection. There exists $t_\f \in \RR(\NN \rightarrow \NN)$ such that for all $i$, $\C_{\Phi_i}(t_\f) = \bigcup_{f \in \f} \C_{\Phi_i}(f).$
\end{customconj}

Towards this conjecture, it is natural to try to generalize our proof of \propref{prop:mainprop1} to the infinite case by taking $t_\f(n) \defeq \min_{i \leq n}t_{\f, i}(n)$ for every $n$, which is in fact total computable. Whereas for every $i$, $t_\f \leq t_{\f, i}$ so that $\C_{\Phi_i}(t_\f) \subseteq \bigcup_{f \in \f} \C_{\Phi_i}(f)$, it is not obviously the case that $\C_{\Phi_i}(t_\f) \supseteq \bigcup_{f \in \f} \C_{\Phi_i}(f)$. For example, it is not obviously the case that $f \leq t_\f$ for every $f \in \f$. Therefore, while establishing this latter containment in the finite case is relatively straightforward, it is not so---and in fact it appears to be rather difficult---in the infinite case.

\section{Complexity Class Operators for General Blum Language Classes}

Our main result concerns language classes that are definable by applying a certain complexity class operator to some Blum language class. While there are many different complexity class operators, the bulk of them operate either (i) by alternately existentially and universally quantifying strings of a certain size that satisfy some global condition or (ii) by counting ``witness strings'' of a certain size that also satisfy some global condition. This observation compels us to introduce two new complexity class operators, which collectively encapsulate many of the well-known operators.

\subsection{The Operator $\Sigma_{k}^\f$ and the Class $\Sigma_k\L^\f_\Phi$} The levels $\Sigma_k\P$ of the polynomial hierarchy exhibit certain structure that not all language classes afford. In particular, each level $\Sigma_k\P$ is definable as the class obtained after applying the complexity class operator $\Sigma_k^\poly$ to the Blum language class $\P$. We define the most general form of this operator as follows.

\begin{definition}
Let $\f$ be an acceptable collection and let $k \in \NN$. For a class $\L$ of languages, $\Sigma_{k}^\f \cdot \L$ is the class of languages $L$ for which there exist $f_1, \dots, f_k \in \f$ and $L' \in \L$ such that for all $x$,
\begin{equation}
x \in L \iff \left(\exists y_1 \in \Sigma^{f_1(|x|)}\right)\left(\forall y_2 \in \Sigma^{f_2(|x|)}\right)\dots\left(Q_k y_k \in \Sigma^{f_k(|x|)}\right) \chi_{L'}(\words{x, y_1, \dots, y_k}) = 1,
\label{eq:operatoraction}
\end{equation}
where $Q_k$ is $\exists$ if $k$ is odd and $\forall$ if $k$ is even.
\end{definition}

It is easily seen that $\Sigma_k^\f$ obeys the following monotonicity property.

\begin{observation}
For all acceptable collections $\f$, all $k$, and all language classes $\L_1$ and $\L_2$, $\L_1 \subseteq \L_2$ implies $\Sigma_k^\f \cdot \L_1 \subseteq \Sigma_k^\f \cdot \L_2$.
\label{obs:monotone}
\end{observation}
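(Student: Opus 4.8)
The statement to prove is Observation~\ref{obs:monotone}, the monotonicity of the operator $\Sigma_k^\f$.

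\medskip

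The plan is to argue directly from the definition of $\Sigma_k^\f \cdot \L$, observing that the witness data certifying membership in $\Sigma_k^\f \cdot \L_1$ transfers verbatim to certify membership in $\Sigma_k^\f \cdot \L_2$. Suppose $L \in \Sigma_k^\f \cdot \L_1$. By definition, there exist functions $f_1, \dots, f_k \in \f$ and a language $L' \in \L_1$ such that the quantifier equivalence in Eq.~(\ref{eq:operatoraction}) holds for all $x$. The key point is that this equivalence is a statement about $x$, the $f_i$, and $L'$ alone; the only role played by the ambient class is that $L'$ is required to belong to it. Since $\L_1 \subseteq \L_2$ by hypothesis, the very same $L'$ also lies in $\L_2$.

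\medskip

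Concretely, I would take the identical tuple $f_1, \dots, f_k$ and the identical $L'$ as the witnesses for membership of $L$ in $\Sigma_k^\f \cdot \L_2$. The functions $f_1, \dots, f_k$ remain elements of the fixed acceptable collection $\f$, which does not depend on the language class at all, and $L'$ now qualifies as an element of $\L_2$. The quantifier expression on the right-hand side of Eq.~(\ref{eq:operatoraction}) is syntactically unchanged, so the biconditional $x \in L \iff \dots$ continues to hold for all $x$. Hence $L$ satisfies the defining condition for $\Sigma_k^\f \cdot \L_2$, giving $L \in \Sigma_k^\f \cdot \L_2$. Since $L$ was arbitrary, $\Sigma_k^\f \cdot \L_1 \subseteq \Sigma_k^\f \cdot \L_2$, as claimed.

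\medskip

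I do not anticipate a genuine obstacle here: monotonicity is essentially a tautology built into the existential form of the definition, since the class parameter $\L$ enters only through the requirement $L' \in \L$. The one point meriting a moment's care is to confirm that the choice of $f_1, \dots, f_k$ is independent of the surrounding class, so that reusing them causes no difficulty; this is immediate because $\f$ is a fixed acceptable collection specified before any language class is mentioned. No property of $\f$ beyond its being the same collection on both sides is needed, and in particular none of conditions (a)--(e) is invoked.
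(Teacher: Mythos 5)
Your proof is correct and is exactly the routine argument the paper has in mind when it declares the monotonicity property ``easily seen'' and states \obsref{obs:monotone} without proof: the witnesses $f_1,\dots,f_k \in \f$ and $L' \in \L_1$ for membership in $\Sigma_k^\f \cdot \L_1$ carry over verbatim since $L' \in \L_2$. Nothing further is needed.
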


Importantly, for general $t$, the class $\Sigma_k^\poly \cdot \DTIME(t)$ is not $\Sigma_k\TIME(t)$, which is defined as follows.

\begin{definition}
The class $\Sigma_k\TIME(t)$ consists of all languages $L$ that are decided by alternating Turing machines that begin in the existential state, alternate at most $k - 1$ times, and halt in at most $t$ steps. Equivalently, $L \in \Sigma_k\TIME(t)$ iff there exist $f_1, \dots, f_k \in \poly$ and $\varphi_e \in \PPP(\Sigma^* \rightarrow \{0,1\})$ such that for all $x, w_1, \dots, w_k$:
\begin{enumerate}[(i)]
\item $\dtime(e, \words{x, w_1, \dots, w_k}) \leq t(|\words{x}|)$,
\item and
$$
x \in L \iff \left(\exists y_1 \in \Sigma^{f_1(|x|)}\right)\left(\forall y_2 \in \Sigma^{f_2(|x|)}\right)\dots\left(Q_k y_k \in \Sigma^{f_k(|x|)}\right) \varphi_e(\words{x, y_1, \dots, y_k}) = 1.
$$
\end{enumerate}
\end{definition}

Here, condition (i) enforces that $\varphi_e$ is computable on a deterministic Turing machine that halts in at most $t(|\words{x}|)$ steps, as opposed to, say, $t(|\words{x, y_1, \dots, y_k}|)$ steps. Thus, in contrast to the definition of the class $\Sigma_k^\poly \cdot \DTIME(t)$, languages in $\Sigma_k\TIME(t)$ are decided by deterministic Turing machines whose runtimes are \emph{not} influenced by the witness strings $y_1, \dots, y_k$. Consequently, for general $t$, it follows from the time hierarchy theorem that $\Sigma_k^\poly \cdot \DTIME(t) \neq \Sigma_k\TIME(t)$.

We now introduce the class $\Sigma_k\L_\Phi^\f(t)$, which is defined analogically to the alternate but equivalent definition of $\Sigma_k\TIME(t)$, but for any acceptable collection $\f$ and Blum measure $\Phi$.

\begin{definition}
Let $\f$ be an acceptable collection and let $\Phi$ be a Blum measure. The class $\Sigma_k\L_\Phi^\f(t)$ consists of all languages $L$ for which there exist $f_1, \dots, f_k \in \f$ and $\varphi_e \in \PPP(\Sigma^* \rightarrow \{0,1\})$ such that for all $x, w_1, \dots, w_k$:
\begin{enumerate}[(i)]
\item $\Phi(e, \words{x, w_1, \dots, w_k}) \leq t(|\words{x}|)$,
\item and
$$
x \in L \iff \left(\exists y_1 \in \Sigma^{f_1(|x|)}\right)\left(\forall y_2 \in \Sigma^{f_2(|x|)}\right)\dots\left(Q_k y_k \in \Sigma^{f_k(|x|)}\right) \varphi_e(\words{x, y_1, \dots, y_k}) = 1.
$$
\end{enumerate}
\end{definition}

As above, here condition (i) enforces that $\varphi_e$ is computable in less than $t(|\words{x}|)$ of the Blum measure $\Phi$. In other words, the amount of $\Phi$ it takes to compute $\varphi_e$ is \emph{independent} of the witness strings $y_1, \dots, y_k$. Again, this is in contrast to the definition of the class $\Sigma_k^\f \cdot \L_\Phi(t)$, where the witness strings \emph{do} implicitly influence the amount of $\Phi$ it takes to compute $\chi_{L'}$ in \eref{eq:operatoraction}. For general $t$, therefore, we expect $\Sigma_k^\f \cdot \L_\Phi(t) \neq \Sigma_k\L_\Phi^\f(t)$. Nevertheless, as we prove in \sref{sec:mainproofs}, if $t = t_\f$ in \propref{prop:mainprop1}, then equality does in fact hold.

These same considerations hold for the complement operator $\Pi_k^\f$ and the class $\Pi_k\L^\f_\Phi(t)$, which are got from $\Pi_k^\poly$ and $\Pi_k\TIME(t)$, respectively, in the same manner that $\Sigma_k^\f$ and $\Sigma_k\L^\f_\Phi(t)$ were got from $\Sigma_k^\poly$ and $\Sigma_k\TIME(t)$, respectively.

\subsection{The Operator $\W_\omega^\f$ and the Class $\W_\omega\L_\Phi^\f$}

Like $\Sigma_k\P$, classes like $\BPP$, $\RP$, and $\UP$ are definable by applying a particular complexity class operator to the Blum language class $\P$, namely, $\BP$, $\R$, and $\U$, respectively. Unlike $\Sigma_k^\poly$, however, these operators are definitionally identical, modulo a ``witness criterion'' on the number of witness strings.

\begin{definition}
A \emph{witness criterion} $\omega$ is a tuple $(R_1, h_1, R_2, h_2)$, where $h_1, h_2 \in \RR(\NN \rightarrow \NN)$ and $R_1, R_2$ are total computable relations over $\NN$ such as $=, \neq, <, >, \leq, \geq, \equiv_k$, and $\not\equiv_k$.
\end{definition}

Indeed, as we shall see, the following $\f$- and $\omega$-based operator $\W^\f_\omega$ reduces to operators like $\BP$, $\R$, and $\U$, provided that $\f$ and $\omega$ are appropriately specified. Its definition relies on the notion of a \emph{witness set} $W$, where for any function $f$, language $L$, and string $x$, 
$$
W(f, L, x) \defeq \left\{y \in \Sigma^{f(|x|)} \mid \chi_L(\words{x,y}) = 1\right\}.
$$

\begin{definition}
Let $\f$ be an acceptable collection and let $\omega = (R_1, h_1, R_2, h_2)$ be a witness criterion. For a class $\L$ of languages, $\W^\f_\omega \cdot \L$ is the class of languages $L$ for which there exists $f \in \f$ and $L' \in \L$ such that for all $x$,
\begin{equation}
\begin{cases}
x \in L \implies \#W(f, L', x) \,R_1\, h_1(f(|x|)),\\
x \not\in L \implies \#W(f, L', x) \,R_2\, h_2(f(|x|)).
\end{cases}
\label{eq:Woperator}
\end{equation}
\end{definition}

Indeed, $\W^\f_\omega$ reduces to the $\BP$, $\R$, and $\U$ operators:
$$
\BP = \W^{\poly}_{(\geq, 2S/3, \leq, S/3)}, \quad \R = \W^{\poly}_{(\geq, 2S/3, =, 0)}, \quad \text{and} \quad \U = \W^\poly_{(=, 1, =, 0)},
$$
where for all $n$, $S(n) \defeq (\#\Sigma)^n$. In fact, $\W^\f_\omega$ generalizes all the operators in Table~\ref{tab:witnesscriterions}, save $\Sigma_k^\f$ and $\Pi_k^\f$ for $k \geq 2$.

Like $\Sigma_k^\f$, $\W^\f_{\omega}$ is monotone. 
\begin{observation}
For all acceptable collections $\f$, all witness criteria $\omega$, and all language classes $\L_1$ and $\L_2$, $\L_1 \subseteq \L_2$ implies $\W^\f_{\omega} \cdot \L_1 \subseteq \W^\f_{\omega} \cdot \L_2$.
\label{obs:witnessmonotone}
\end{observation}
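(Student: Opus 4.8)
The plan is to prove this by directly unwinding the definition of $\W^\f_\omega$, exactly as one would establish the analogous \obsref{obs:monotone} for $\Sigma_k^\f$. The structural feature to exploit is that in the definition of $\W^\f_\omega \cdot \L$, the only ingredient actually drawn from the operated class $\L$ is the witnessing language $L'$; the function $f \in \f$ and the witness criterion $\omega = (R_1, h_1, R_2, h_2)$ are fixed data independent of $\L$. Consequently, enlarging the operated class can only enlarge the pool of admissible witnesses, and hence the operator is monotone.

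Concretely, I would fix an arbitrary $L \in \W^\f_\omega \cdot \L_1$. By definition there exist $f \in \f$ and $L' \in \L_1$ such that \eref{eq:Woperator} holds for every $x$; that is, $x \in L$ implies $\#W(f, L', x)\, R_1\, h_1(f(|x|))$, and $x \notin L$ implies $\#W(f, L', x)\, R_2\, h_2(f(|x|))$. Since $\L_1 \subseteq \L_2$ by hypothesis, the witnessing language satisfies $L' \in \L_2$ as well. The identical pair $(f, L')$ therefore certifies $L \in \W^\f_\omega \cdot \L_2$, because \eref{eq:Woperator} is a condition on $L$, $f$, and $L'$ alone and is wholly insensitive to which class $L'$ is regarded as belonging to. As $L$ was arbitrary, this yields $\W^\f_\omega \cdot \L_1 \subseteq \W^\f_\omega \cdot \L_2$.

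There is no genuine obstacle here: once one observes that the defining condition references the operated class only through the clause ``$L' \in \L$'', monotonicity is immediate. Indeed, the same one-line argument would establish monotonicity for \emph{any} operator defined by asserting the existence of a witnessing language in the operated class subject to a condition that does not otherwise depend on that class.
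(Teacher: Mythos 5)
Your argument is correct and is precisely the one-line unwinding of the definition that the paper leaves implicit by stating this as an observation without proof: the certifying pair $(f, L')$ for membership in $\W^\f_\omega \cdot \L_1$ works verbatim for $\W^\f_\omega \cdot \L_2$ since $L' \in \L_1 \subseteq \L_2$. Nothing further is needed.
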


\begin{table}
\begin{center}
{\renewcommand{\arraystretch}{1.2}
\begin{tabular}{|c|c|c|}
\hline
Complexity Class & Class Operator & Witness Criterion\\
\hline
$\DTIME$ & $\mathsf{ID}$ & $(=, S, =, 0)$\\
\hline
$\BPTIME$ & $\BP$ & $(\geq, 2S/3, \leq, S/3)$\\
\hline
$\PTIME$ & $\P$ & $(>, S/2, <, S/2)$\\
\hline
$\NTIME$ & $\exists^p$ & $(>, 0, =, 0)$\\
\hline
$\co\NTIME$ & $\forall^p$ & $(=, S, <, S)$\\
\hline
$\Sigma_k\TIME$ & $\Sigma_{k}^{\poly}$ & NA for $k \geq 2$\\
\hline
$\Pi_k\TIME$ & $\Pi_{k}^{\poly}$ & NA for $k \geq 2$\\
\hline
$\RTIME$ & $\R$ & $(\geq, S/2, =, 0)$\\
\hline
$\co\RTIME$ & $\co\R$ & $(=, S, \leq, S/2)$\\
\hline
$\UTIME$ & $\U$ & $(=, 1, =, 0)$\\
\hline
$\co\UTIME$ & $\co\U$ & $(=, S - 1, =, S)$\\
\hline
$\parity\TIME$ & $\parity$ & $(\equiv_2, 1, \equiv_2, 0)$\\
\hline
$\co\parity\TIME$ & $\co\parity$ & $(\equiv_2, 0, \equiv_2, 1)$\\
\hline
$\Mod_k\TIME$ & $\Mod_k$ & $(\not\equiv_k, 0, \equiv_k, 0)$\\
\hline
$\co\Mod_k\TIME$ & $\co\Mod_k$ & $(\equiv_k, 0, \not\equiv_k, 0)$\\
\hline
\end{tabular}
}
\end{center}
\caption{The witness criteria that correspond to different witness-based complexity classes and operators. Here, $S(n) \defeq (\#\Sigma)^{n}$ for every $n$.}
\label{tab:witnesscriterions}
\end{table}

Importantly, like $\Sigma_k^\poly \cdot \DTIME(t)$ and $\Sigma_k\TIME(t)$, for general $t$, the classes $\BP \cdot \DTIME(t)$ and $\BPTIME(t)$ are different, as are $\R \cdot \DTIME(t)$ and $\RTIME(t)$, $\U \cdot \DTIME(t)$ and $\UTIME(t)$, and so forth. To illustrate why, we recall the definition of $\BPTIME(t)$.

\begin{definition}
The class $\BPTIME(t)$ consists of all languages $L$ that are decided by probabilistic Turing machines that make polynomially many probabilistic choices and halt in at most $t$ steps. Equivalently, $L \in \BPTIME(t)$ iff there exists $f \in \poly$ and $\varphi_e \in \PPP(\Sigma^* \rightarrow \{0,1\})$ such that for all $x$ and $w$:
\begin{enumerate}[(i)]
\item $\dtime(e, \words{x, w}) \leq t(\words{x})$,
\item and
$$
\begin{cases}
x \in L \implies \#\left\{y \in \Sigma^{f(|x|)} \mid \varphi_e(\words{x,y}) = 1\right\} \geq \frac{2}{3} (\#\Sigma)^{f(|x|)},\\
x \not\in L \implies \#\left\{y \in \Sigma^{f(|x|)} \mid \varphi_e(\words{x,y}) = 1\right\} \leq \frac{1}{3} (\#\Sigma)^{f(|x|)}.
\end{cases}
$$
\end{enumerate}
\end{definition}

Here, condition (i) enforces that $\varphi_e$ is computable on a deterministic Turing machine that halts in at most $t(|\words{x}|)$ steps, as opposed to $t(|\words{x,y}|)$ steps. Thus, in contrast to the definition of the class $\BP \cdot \DTIME(t)$, languages in $\BPTIME(t)$ are decided by deterministic Turing machines whose runtimes are \emph{not} influenced by the witness string $y$. Consequently, for general $t$, it follows from the time hierarchy theorem that $\BP \cdot \DTIME(t) \neq \BP\TIME(t)$. Of course, these same conclusions hold for classes like $\R \cdot \DTIME(t)$ and $\RTIME(t)$, $\U \cdot \DTIME(t)$ and $\UTIME(t)$, and in fact all the classes and their operator-defined counterparts that are listed in Table~\ref{tab:witnesscriterions}.

We now introduce the class $\W_\omega\L_\Phi^\f(t)$, which is defined analogically to the alternate but equivalent definition of $\BPTIME(t)$, but for any acceptable collection $\f$, witness criterion $\omega$, and Blum measure $\Phi$.

\begin{definition}
Let $\f$ be an acceptable collection, let $\Phi$ be a Blum measure, and let $\omega = (R_1, h_1, R_2, h_2)$ be a witness criterion. The class $\W_\omega\L^\f_\Phi(t)$ consists of all languages $L$ for which there exists $f \in \f$ and $\varphi_e \in \PPP(\Sigma^* \rightarrow \{0,1\})$ such that for all $x$ and $w$:
\begin{enumerate}[(i)]
\item $\Phi(e, \words{x, w}) \leq t(\words{x})$,
\item and
$$
\begin{cases}
x \in L \implies \#\left\{y \in \Sigma^{f(|x|)} \mid \varphi_e(\words{x,y}) = 1\right\} R_1\, h_1(f(|x|)),\\
x \not\in L \implies \#\left\{y \in \Sigma^{f(|x|)} \mid \varphi_e(\words{x,y}) = 1\right\} R_2\, h_2(f(|x|)).
\end{cases}
$$
\end{enumerate}
\end{definition}

As above, here condition (i) enforces that $\varphi_e$ is computable in less than $t(|\words{x}|)$ of the Blum measure $\Phi$. In other words, the amount of $\Phi$ it takes to compute $\varphi_e$ is \emph{independent} of the witness string $y$. Again, this is in contrast to the definition of the class $\W_\omega^\f \cdot \L_\Phi(t)$, where the witness string \emph{does} implicitly influence the amount of $\Phi$ it takes to compute $\chi_{L'}$ in \eref{eq:Woperator}. For general $t$, therefore, we expect $\W_\omega^\f \cdot \L_\Phi(t) \neq \W_\omega\L_\Phi^\f(t)$. Nevertheless, as we prove in \sref{sec:mainproofs}, if $t = t_\f$ in \propref{prop:mainprop1}, then equality does in fact hold.

\section{Statement of Main Result}

We now state our main theorem.

\begin{customthm}{A}
\label{thm:uniontheorem}
Let $\f$ be an acceptable collection and let $\{\Phi_i \mid i \in [\ell]\}$ be a finite number of Blum measures. There exists a non-decreasing function $t_\f \in \RR(\NN \rightarrow \NN)$ such that for all $i \in [\ell]$, all $k$, and all witness criteria $\omega$:
\begin{enumerate}[(i)]
\item $\L_{\Phi_i}(t_\f) = \bigcup_{f \in \f}\L_{\Phi_i}(f)$,
\item $\Sigma_k\L^{\f}_{\Phi_i}(t_\f) = \bigcup_{f \in \f} \Sigma_k\L^{\f}_{\Phi_i}(f)$,
\item $\W_{\omega}\L^\f_{\Phi_i}(t_\f) = \bigcup_{f \in \f}\W_{\omega}\L^\f_{\Phi_i}(f)$.
\end{enumerate}
\end{customthm}

Note how this establishes an \emph{infinity} of complexity classes that are all characterized by the \emph{same} bound $t_\f$. We regard this as evidence for \conjref{conj:mainconj}. Importantly, \thmref{thm:uniontheorem} relativizes.

Notice, with $\f = \poly$ and just the single Blum measure $\dtime$, \thmref{thm:uniontheorem} yields our main corollary, which is an even stronger version of \corref{cor:intcorollary}.

\begin{customcor}{A}
\label{cor:maincor}
There exists a non-decreasing function $t_\poly \in \RR(\NN \rightarrow \NN)$ such that for all $k$ and all witness criteria $\omega$:
\begin{enumerate}[(i)]
\item $\P = \DTIME(t_\poly)$,
\item $\NP = \NTIME(t_\poly)$,
\item $\coNP = \co\NTIME(t_\poly)$,
\item $\PP = \PTIME(t_\poly)$,
\item $\BPP = \BPTIME(t_\poly)$,
\item $\RP = \RTIME(t_\poly)$,
\item $\co\RP = \co\RTIME(t_\poly)$,
\item $\UP = \UTIME(t_\poly)$,
\item $\co\UP = \co\UTIME(t_\poly)$,
\item $\PSPACE = \DSPACE(t_\poly)$,
\item $\Mod_k\P = \Mod_k\TIME(t_\poly)$,
\item $\co\Mod_k\P = \co\Mod_k\TIME(t_\poly)$,
\item $\Sigma_k\P = \Sigma_k\TIME(t_\poly)$,
\item $\Pi_k\P = \Pi_k\TIME(t_\poly)$,
\item $\W_\omega\L_\dtime^\poly(t_\poly) = \bigcup_{f \in \poly}\W_\omega\L_\dtime^\poly(f)$.
\end{enumerate}
\end{customcor}

Of course, save statement (xv), \corref{cor:maincor} more or less follows from our informal discussion in \sref{sec:intro} and the fact that $t_\poly$ is non-decreasing. Nevertheless, it also follows from our more general result, whose proof, as we now show, is a bit more involved.

\section{Proof of Main Result}
\label{sec:mainproofs}

We start by proving several things about the $\Sigma_{k}^\f$ and $\W_{\omega}^\f$ operators.

\begin{lemma}
Let $\f$ be an acceptable collection and let $\Phi$ be a Blum measure. For all $f \in \f$, all $k$, and all witness criteria $\omega$, there exist $g,h \in \f$ such that:
\begin{enumerate}[(i)]
\item $\Sigma_{k}^{\f} \cdot \L_\Phi(f) \subseteq \Sigma_{k}\L_\Phi^{\f}(g) \subseteq \Sigma_{k}^{\f} \cdot \L_\Phi(g)$,
\item $\W_{\omega}^{\f} \cdot \L_\Phi(f) \subseteq \W_{\omega}\L_\Phi^{\f}(h) \subseteq \W_{\omega}^{\f} \cdot \L_\Phi(h)$.
\end{enumerate}
\label{lem:subsetlem}
\end{lemma}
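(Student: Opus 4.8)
The plan is to prove both sandwiches by converting, via explicit machine modifications, between the ``operator'' formulation, in which the $\Phi$-budget is charged against the length of the whole string $\words{x,y_1,\dots,y_k}$, and the ``intrinsic'' formulation, in which it is charged only against $|\words{x}|$. The only two tools I expect to need are the pairing estimate \eref{eq:pairingfunctionsize} and property~(e) of acceptable collections; I would carry out part~(i) in full and obtain part~(ii) by specializing to a single witness.

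For the left inclusion of~(i), suppose $L\in\Sigma_k^\f\cdot\L_\Phi(f)$ is witnessed by lengths $f_1,\dots,f_k\in\f$ and a verifier $L'\in\L_\Phi(f)$, say $\chi_{L'}=\varphi_e$ with $\Phi(e,z)\le f(|z|)\ae$. I would replace $\varphi_e$ by a machine $\varphi_{e'}$ that, on input $\words{x,w_1,\dots,w_k}$, first recomputes each $f_i(|x|)$, checks that $|w_i|=f_i(|x|)$ for every $i$, and only then simulates $\varphi_e$, rejecting outright if any check fails. This leaves the quantified predicate of \eref{eq:operatoraction} unchanged on the strings that matter---those with $y_i\in\Sigma^{f_i(|x|)}$---so the language $L$ is unaffected, while the checks force $\varphi_{e'}$ to halt quickly on \emph{every} tuple $\words{x,w_1,\dots,w_k}$, which is exactly what the universally quantified condition~(i) of the intrinsic class demands. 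On correctly sized tuples, \eref{eq:pairingfunctionsize} bounds the full input length by
\[
|\words{x,w_1,\dots,w_k}|\ \le\ |x|+\sum_{i=1}^k f_i(|x|)+O\!\left(\log|x|+\sum_{i=1}^{k-1}\log f_i(|x|)\right),
\]
and, since $f$ is non-decreasing, property~(e) applied with outer function $f$ and inner functions $f_1,\dots,f_k$ supplies a $g\in\f$ dominating $f$ composed with the right-hand side. Hence $\Phi(e',\words{x,w_1,\dots,w_k})\le g(|x|)\le g(|\words{x}|)\ae$, which places $L$ in $\Sigma_k\L_\Phi^\f(g)$.

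The right inclusion of~(i) is the easy direction, because passing from the bound $g(|\words{x}|)$ to the bound $g(|z|)$ only relaxes the constraint: every tuple $z=\words{x,w_1,\dots,w_k}$ satisfies $|\words{x}|\le|z|$, so monotonicity of $g$ gives $g(|\words{x}|)\le g(|z|)$. Given $L\in\Sigma_k\L_\Phi^\f(g)$ with verifier $\varphi_e$, I would let $\varphi_{e''}$ decode its input as a $(k+1)$-tuple and run $\varphi_e$ whenever decoding succeeds, rejecting otherwise; this makes $\chi_{L''}=\varphi_{e''}$ total and equal to $\varphi_e$ on all tuples, and the displayed inequality then yields $\Phi(e'',z)\le g(|z|)\ae$, i.e.\ $L''\in\L_\Phi(g)$ and $L\in\Sigma_k^\f\cdot\L_\Phi(g)$. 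Part~(ii) runs identically with a single witness $y\in\Sigma^{f(|x|)}$: the wrapper checks $|w|=f(|x|)$, \eref{eq:pairingfunctionsize} specializes to $|\words{x,w}|\le|x|+f(|x|)+O(\log|x|)$, property~(e) with $\ell=1$ supplies the required $h\in\f$, and the counting condition of \eref{eq:Woperator} is left untouched because it ranges only over $y$ of the correct length.

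The step I expect to be the main obstacle is the $\Phi$-bookkeeping for the wrapped machines. For a general Blum measure the cost of ``decode, check the lengths, then simulate'' is not additive the way running time is, so I cannot simply sum the contributions; instead I would invoke the standard combining properties of Blum measures to bound $\Phi(e',\cdot)$ by a fixed recursive function of the simulated value $\Phi(e,\cdot)$ together with the input length, and then lean on property~(e) to absorb that entire expression---overhead and all---into a single member of $\f$. The two points that will need genuine care are that this absorption can be carried out with the \emph{same} $g$ occupying both the middle and right-hand terms of the chain, and that the length checks really do guarantee fast halting on the malformed and wrong-length inputs that the quantification over all $w_1,\dots,w_k$ forces into consideration.
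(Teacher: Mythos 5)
Your overall sandwich structure matches the paper's, but you take a genuinely different route on the key step, and that route contains a gap that you flag yourself but do not close. The paper never modifies the verifier: for the left inclusion it keeps the same G\"odel number $e$ satisfying $\Phi(e,z)\le f(|z|)$, plugs the pairing estimate \eref{eq:pairingfunctionsize} into $f$, and invokes property~(e) to produce $g\in\f$ with $\Phi(e,\words{x,w_1,\dots,w_k})\le g(|x|)\le g(|\words{x}|)$; for the right inclusion it simply sets $L'\defeq\{z\mid\varphi_e(z)=1\}$ and uses the monotonicity of $g$ to pass from $g(|\words{x}|)$ to $g(|\words{x,y_1,\dots,y_k}|)$. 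You instead wrap $\varphi_e$ in new machines $\varphi_{e'}$ and $\varphi_{e''}$ that recompute the $f_i$, check witness lengths, and decode tuples, and you correctly identify the $\Phi$-cost of these wrappers as the crux. That is exactly where the argument breaks: for an arbitrary Blum measure, the combining/relatedness machinery only yields $\Phi(e',z)\le r(z,\Phi(e,z))$ a.e.\ for \emph{some} total recursive $r$ depending on the transformation and on $\Phi$, and property~(e) cannot absorb such an $r$. Property~(e) closes $\f$ only under the very specific composition pattern of \eref{eq:property6}, in which every ingredient is itself a member of $\f$; it says nothing about composition with an arbitrary recursive overhead function, which in general outgrows every member of $\f$ (think of a measure under which decoding and length-checking already cost more than any $f_i$ permits). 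So the step ``lean on property~(e) to absorb that entire expression---overhead and all---into a single member of $\f$'' is not available, and without it neither $\varphi_{e'}$ nor $\varphi_{e''}$ can be certified to lie in the required classes. The same objection applies to the wrapper $\varphi_{e''}$ in your right inclusion, which the paper handles with no machine modification at all.

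To be fair, your wrappers are motivated by a real subtlety: condition~(i) of $\Sigma_k\L_\Phi^{\f}(t)$ quantifies over \emph{all} $w_1,\dots,w_k$, whereas the paper's chain of inequalities silently substitutes $|w_i|=f_i(|x|)$. But under the bare Blum axioms the cure is worse than the disease. The repair consistent with the paper is to read the $w_i$ in condition~(i) as ranging over $\Sigma^{f_i(|x|)}$ (matching how they are used in condition~(ii)) and then to argue purely about the complexity bound of the \emph{unmodified} machine, so that only \eref{eq:pairingfunctionsize}, property~(c), and property~(e) are ever invoked.
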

\begin{proof}
The proof of (i) is below; the proof of (ii) is analogous.

Fix $f \in \f$ and let $L \in \Sigma_k^\f \cdot \L_\Phi(f).$ Then, there are $f_1, \dots, f_k \in \f$ and $L' \in \L_\Phi(f)$ such that for all $x$,
$$
x \in L \iff \left(\exists y_1 \in \Sigma^{f_1(|x|)}\right)\left(\forall y_2 \in \Sigma^{f_2(|x|)}\right)\dots\left(Q_k y_k \in \Sigma^{f_k(|x|)}\right) \chi_{L'}(\words{x, y_1, \dots, y_k}) = 1.
$$
Since $L' \in \L_\Phi(f)$, for all $x, w_1, \dots, w_k$,
$$
\Phi(e, \words{x, w_1, \dots, w_k}) \leq f(|\words{x, w_1, \dots, w_k}|),
$$
where $e$ is the G\"odel number of $\chi_{L'}$. To establish the first containment in (i), it remains to show that there is $g \in \f$ such that for all $x, w_1, \dots, w_k$, $\Phi(e, \words{x, w_1, \dots, w_k}) \leq g(|\words{x}|)$. To this end, note that by \eref{eq:pairingfunctionsize},
\begin{align*}
|\words{x, w_1, \dots, w_k}| &\leq |x| + \sum_{i = 1}^k |w_i| + O\left(\log |x| + \sum_{i = 1}^{k - 1} \log |w_i|\right)\\
&= |x| + \sum_{i = 1}^k f_i(|x|) + O\left(\log |x| + \sum_{i = 1}^{k - 1} \log f_i(|x|)\right).
\end{align*}
Therefore, 
$$
\Phi(e, \words{x, w_1, \dots, w_k}) \leq f\left(|x| + \sum_{i = 1}^k f_i(|x|) + O\left(\log |x| + \sum_{i = 1}^{k - 1} \log f_i(|x|)\right)\right).
$$
Thus, by property (e) of the acceptable collection $\f$, there is indeed $g \in \f$ such that for all $x, w_1, \dots, w_k$, $\Phi(e, \words{x, w_1, \dots, w_k}) \leq g(|x|) \leq g(|\words{x}|)$, where the latter inequality holds because $|x| \leq |\words{x}|$ and $g$ is non-decreasing (property (c) of the acceptable collection $\f$). Consequently, $L \in \Sigma_k\L_\Phi^\f(g)$, as desired.

We now establish the second containment in (i). Let $L \in \Sigma_k\L_{\Phi}(g).$ Then there exist $g_1, \dots, g_k \in \f$ and $\varphi_e \in \PPP(\Sigma^* \rightarrow \{0,1\})$ such that for all $x,w_1, \dots, w_k$, $\Phi(e, \words{x, w_1, \dots, w_k}) \leq g(|\words{x}|)$ and
$$
x \in L \iff \left(\exists y_1 \in \Sigma^{g_1(|x|)}\right)\left(\forall y_2 \in \Sigma^{g_2(|x|)}\right)\dots\left(Q_k y_k \in \Sigma^{g_k(|x|)}\right) \varphi_e(\words{x, y_1, \dots, y_k}) = 1.
$$
Since $g$ is non-decreasing,
$$
\Phi(e, \words{x, y_1, \dots, y_k}) \leq g(|\words{x}|) \leq g(|\words{x, y_1, \dots, y_k}|).
$$ 
Therefore, $L' \defeq \{x \mid \varphi_e(x) = 1\} \in \L_{\Phi}(g)$ and
$$
x \in L \iff \left(\exists y_1 \in \Sigma^{g_1(|x|)}\right)\left(\forall y_2 \in \Sigma^{g_2(|x|)}\right)\dots\left(Q_k y_k \in \Sigma^{g_k(|x|)}\right) \chi_{L'}(\words{x, y_1, \dots, y_k}) = 1.
$$
Thus, $L \in \Sigma_k^\f \cdot \L_{\Phi_i}(g)$, as desired.
\end{proof}

The next lemma shows that the $\Sigma^\f_{k}$ and $\W_\omega^\f$ operators pass through unions over $\f$.

\begin{lemma}
Let $\f$ be an acceptable collection and let $\Phi$ be a Blum measure. For all $k$ and all witness criteria $\omega$:
\begin{enumerate}[(i)]
\item $\Sigma_k^\f \cdot \bigcup_{f \in \f} \L_\Phi(f) = \bigcup_{f \in \f} \Sigma_k^\f \cdot \L_\Phi(f)$,
\item $\W_\omega^\f \cdot \bigcup_{f \in \f} \L_\Phi(f) = \bigcup_{f \in \f} \W_\omega^\f \cdot \L_\Phi(f)$.
\end{enumerate}
\label{lem:unionPass}
\end{lemma}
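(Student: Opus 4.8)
The plan is to prove each of (i) and (ii) by establishing the two containments separately, and since the operators $\Sigma_k^\f$ and $\W_\omega^\f$ are defined in the same ``single existential base language'' style, the two proofs are verbatim identical up to notation; I would write out (i) in full and remark that (ii) follows by replacing $\Sigma_k^\f$ with $\W_\omega^\f$, \eref{eq:operatoraction} with \eref{eq:Woperator}, and \obsref{obs:monotone} with \obsref{obs:witnessmonotone}.

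For the containment $\bigcup_{f \in \f} \Sigma_k^\f \cdot \L_\Phi(f) \subseteq \Sigma_k^\f \cdot \bigcup_{f \in \f} \L_\Phi(f)$, I would simply invoke monotonicity. For every fixed $f \in \f$ we have $\L_\Phi(f) \subseteq \bigcup_{g \in \f} \L_\Phi(g)$, so \obsref{obs:monotone} gives $\Sigma_k^\f \cdot \L_\Phi(f) \subseteq \Sigma_k^\f \cdot \bigcup_{g \in \f} \L_\Phi(g)$; taking the union over all $f \in \f$ yields the claim. This direction needs nothing beyond the stated monotonicity of the operator.

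For the reverse containment $\Sigma_k^\f \cdot \bigcup_{f \in \f} \L_\Phi(f) \subseteq \bigcup_{f \in \f} \Sigma_k^\f \cdot \L_\Phi(f)$, I would unwind the definition of the operator. If $L \in \Sigma_k^\f \cdot \bigcup_{f \in \f} \L_\Phi(f)$, then by definition there exist $f_1, \dots, f_k \in \f$ and a \emph{single} base language $L' \in \bigcup_{f \in \f} \L_\Phi(f)$ for which \eref{eq:operatoraction} holds for all $x$. The key observation is that $L' \in \bigcup_{f \in \f} \L_\Phi(f)$ means precisely that $L' \in \L_\Phi(f^*)$ for some individual $f^* \in \f$. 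The same $f_1, \dots, f_k$ and the same $L'$ then witness $L \in \Sigma_k^\f \cdot \L_\Phi(f^*)$, and hence $L \in \bigcup_{f \in \f} \Sigma_k^\f \cdot \L_\Phi(f)$.

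There is no genuine obstacle here: the entire content is that both operators quantify over only one base language $L'$, so the outer union over $\f$ (hidden inside ``$L' \in \bigcup_f \L_\Phi(f)$'') can be pulled out past the existential quantifiers over the witness-size functions $f_1, \dots, f_k$ (resp. the single $f$ in the $\W_\omega^\f$ case). The only point worth checking is that membership in the union forces $L'$ into one fixed $\L_\Phi(f^*)$ rather than requiring a diagonalization over infinitely many base languages simultaneously — which it does not, precisely because the operator definitions reference a single $L'$.
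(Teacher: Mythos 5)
Your proposal is correct and follows essentially the same route as the paper: the easy containment via the monotonicity of the operator (\obsref{obs:monotone}, resp.\ \obsref{obs:witnessmonotone}), and the reverse containment by observing that the single base language $L' \in \bigcup_{f \in \f}\L_\Phi(f)$ must lie in some fixed $\L_\Phi(f^*)$, with the same witness-size functions carrying over. Nothing is missing.
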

\begin{proof}
The proof of (i) is below; the proof of (ii) is analogous.

Since $\L_\Phi(f') \subseteq \bigcup_{f \in \f} \L_\Phi(f)$ for all $f' \in \f$, the monotonicity of $\Sigma_k^\f$ (\obsref{obs:monotone}) implies $\Sigma_k^\f \cdot \L_\Phi(f') \subseteq \Sigma_k^\f \cdot \bigcup_{f \in \f} \L_\Phi(f)$. Hence,
\begin{align*}
\bigcup_{f' \in \f}\Sigma^\f_k \cdot \L_\Phi(f') &\subseteq \bigcup_{f' \in \f}\Sigma^\f_k \cdot \bigcup_{f \in \f} \L_\Phi(f)\\ 
& = \Sigma^\f_k \cdot \bigcup_{f \in \f} \L_\Phi(f).
\end{align*}

For the other direction, let $L \in \Sigma^\f_{k} \cdot \bigcup_{f \in \f} \L_{\Phi}(f)$. Then, there are $f_1, \dots, f_k \in \f$ and $L' \in \bigcup_{f \in \f} \L_{\Phi}(f)$ such that for all $x$, \eref{eq:operatoraction} holds. But $L' \in \bigcup_{f \in \f} \L_{\Phi}(f)$ implies $L' \in \L_{\Phi}(f')$ for some $f' \in \f$. Therefore, $L \in \Sigma^\f_{k} \cdot \L_\Phi(f') \subseteq \bigcup_{f \in \f}\Sigma^\f_{k} \cdot \L_\Phi(f)$.
\end{proof}

The rest of our results rely heavily on the following proposition, which is a weaker version of \propref{prop:mainprop1}.

\begin{proposition}
Let $\{\Phi_i \mid i \in [\ell] \}$ be a finite set of Blum measures and let $\f$ be an acceptable collection. There exists $t_\f \in \RR(\NN \rightarrow \NN)$ such that:
\begin{enumerate}[(i)]
\item $t_\f$ is non-decreasing,
\item for all $f \in \f$, $f \leq t_\f$,
\item for all $i \in [\ell]$, $\L_{\Phi_i}(t_\f) = \bigcup_{f \in \f}\L_{\Phi_i}(f)$.
\end{enumerate}
\label{prop:mainprop}
\end{proposition}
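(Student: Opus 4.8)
The plan is to reduce immediately to \propref{prop:mainprop1}. Since an acceptable collection satisfies conditions (a)--(e), it in particular satisfies (a)--(d), so \propref{prop:mainprop1} applies verbatim to the finite family $\{\Phi_i \mid i \in [\ell]\}$ and yields a non-decreasing $t_\f \in \RR(\NN \rightarrow \NN)$ with $f \leq t_\f$ for every $f \in \f$ and $\C_{\Phi_i}(t_\f) = \bigcup_{f \in \f} \C_{\Phi_i}(f)$ for every $i \in [\ell]$. This $t_\f$ already witnesses (i) and (ii) without further work; the only thing left is to upgrade the equality of Blum \emph{complexity} classes supplied by \propref{prop:mainprop1} to the corresponding equality of Blum \emph{language} classes required by (iii).

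For that last step I would simply unwind the definition $\L_{\Phi_i}(t) = \{L \subseteq \Sigma^* \mid \chi_L \in \C_{\Phi_i}(t)\}$. A language $L$ lies in $\bigcup_{f \in \f} \L_{\Phi_i}(f)$ iff $\chi_L \in \C_{\Phi_i}(f)$ for some $f \in \f$, i.e.\ iff $\chi_L \in \bigcup_{f \in \f} \C_{\Phi_i}(f)$; by \propref{prop:mainprop1} this holds iff $\chi_L \in \C_{\Phi_i}(t_\f)$, i.e.\ iff $L \in \L_{\Phi_i}(t_\f)$. Hence $\L_{\Phi_i}(t_\f) = \bigcup_{f \in \f} \L_{\Phi_i}(f)$ for every $i \in [\ell]$, which is exactly (iii).

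I do not expect any genuine obstacle here: the entire content of the proposition is already packaged in \propref{prop:mainprop1} (and ultimately in the McCreight-Meyer Union Theorem, \thmref{thm:MMTheorem}), and the passage from complexity classes to language classes is purely formal, since the union ranges over a \emph{fixed} family and so membership of $\chi_L$ commutes with the union. The only two points worth flagging explicitly are that an acceptable collection does satisfy the weaker hypotheses (a)--(d) demanded by \propref{prop:mainprop1}, and that the non-decreasing property of $t_\f$---which is precisely what the later arguments in \sref{sec:mainproofs} exploit---is inherited directly from \propref{prop:mainprop1} and requires no re-derivation.
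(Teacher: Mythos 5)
Your proposal is correct and matches the paper's intent exactly: the paper states this proposition as ``a weaker version of \propref{prop:mainprop1}'' and omits the proof, the implicit argument being precisely your reduction---an acceptable collection satisfies (a)--(d), so \propref{prop:mainprop1} supplies the non-decreasing $t_\f$, and the equality of language classes follows formally since $L \in \L_{\Phi_i}(t)$ iff $\chi_L \in \C_{\Phi_i}(t)$ and membership of $\chi_L$ commutes with the union over the fixed family $\f$. No gaps.
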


\begin{lemma}
Let $\f$ be an acceptable collection, let $\{\Phi_i \mid i \in [\ell]\}$ be a finite collection of Blum measures, and let $t_\f$ be as in \propref{prop:mainprop}. For all $i \in [\ell]$, all $k$, and all witness criteria $\omega$:
\begin{enumerate}[(i)]
\item $\L_{\Phi_i}(t_\f) = \bigcup_{f \in \f}\L_{\Phi_i}(f)$,
\item $\Sigma^{\f}_{k} \cdot \L_{\Phi_i}(t_\f) = \bigcup_{f \in \f} \Sigma_k\L_{\Phi_i}^{\f}(f),$
\item $\W^{\f}_{\omega} \cdot \L_{\Phi_i}(t_\f) = \bigcup_{f \in \f} \W_\omega\L_{\Phi_i}^{\f}(f).$
\end{enumerate}
\label{lem:mainlemone}
\end{lemma}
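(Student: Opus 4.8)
The plan is to dispatch (i) by citation and then obtain (ii) and (iii) through an identical chain of equalities, so I would concentrate the work on (ii) and remark at the end that (iii) is the same argument with the $\W_\omega^\f$ machinery in place of the $\Sigma_k^\f$ machinery. Statement (i) is literally the content of \propref{prop:mainprop}(iii), so nothing new is required there.

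For (ii) I would build a chain of three equalities starting from the left-hand side. First, \propref{prop:mainprop}(iii) replaces $\L_{\Phi_i}(t_\f)$ by $\bigcup_{f \in \f}\L_{\Phi_i}(f)$, giving
$$
\Sigma_k^\f \cdot \L_{\Phi_i}(t_\f) = \Sigma_k^\f \cdot \bigcup_{f \in \f}\L_{\Phi_i}(f).
$$
Second, \lemref{lem:unionPass}(i) pulls the operator out of the union, so the right-hand side equals $\bigcup_{f \in \f}\Sigma_k^\f \cdot \L_{\Phi_i}(f)$. It then remains to prove the single equality
$$
\bigcup_{f \in \f}\Sigma_k^\f \cdot \L_{\Phi_i}(f) = \bigcup_{f \in \f}\Sigma_k\L_{\Phi_i}^\f(f),
$$
which I expect to be the only point requiring genuine care.

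I would derive this last equality from the two-sided sandwich of \lemref{lem:subsetlem}(i). For the $\subseteq$ inclusion, fix $f \in \f$; the lemma produces some $g \in \f$ (via property (e) of $\f$) with $\Sigma_k^\f \cdot \L_{\Phi_i}(f) \subseteq \Sigma_k\L_{\Phi_i}^\f(g)$, and since $g \in \f$ the latter class sits inside $\bigcup_{f' \in \f}\Sigma_k\L_{\Phi_i}^\f(f')$; unioning over $f$ gives the inclusion. For the $\supseteq$ inclusion, the right-hand containment of \lemref{lem:subsetlem}(i) holds for an arbitrary index in $\f$ (its proof establishes $\Sigma_k\L_{\Phi_i}^\f(g) \subseteq \Sigma_k^\f \cdot \L_{\Phi_i}(g)$ for any $g$), so $\Sigma_k\L_{\Phi_i}^\f(f) \subseteq \Sigma_k^\f \cdot \L_{\Phi_i}(f)$ for every $f \in \f$ and unioning is immediate. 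The subtle point that makes the $\subseteq$ direction legitimate is precisely that the re-indexing $f \mapsto g$ stays inside $\f$, so nothing escapes the union; this is why the sandwich of \lemref{lem:subsetlem} is stated with both endpoints indexed by elements of $\f$. Chaining the three equalities yields (ii).

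Finally, (iii) follows by repeating the argument verbatim, replacing $\Sigma_k^\f$ by $\W_\omega^\f$, the class $\Sigma_k\L_{\Phi_i}^\f$ by $\W_\omega\L_{\Phi_i}^\f$, and the invocations of \lemref{lem:unionPass}(i) and \lemref{lem:subsetlem}(i) by their part-(ii) counterparts. I do not anticipate a serious obstacle here: the real technical content (property (e) of the acceptable collection, the pairing-function length bounds of \eref{eq:pairingfunctionsize}, and the independence of the Blum cost from the witness strings) has already been discharged inside \lemref{lem:subsetlem}, so the present lemma amounts to a bookkeeping assembly of the two preceding lemmas together with \propref{prop:mainprop}.
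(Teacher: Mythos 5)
Your proposal is correct and follows essentially the same route as the paper: both reduce (ii) to the equality $\bigcup_{f \in \f}\Sigma_k^\f \cdot \L_{\Phi_i}(f) = \bigcup_{f \in \f}\Sigma_k\L_{\Phi_i}^\f(f)$ via \propref{prop:mainprop} and \lemref{lem:unionPass}, and then obtain that equality from the sandwich in \lemref{lem:subsetlem}. Your explicit unpacking of the two inclusions (and the remark that the re-indexed $g$ stays inside $\f$) is just a more careful rendering of the one-line deduction the paper makes from the same sandwich.
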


\begin{proof}
\propref{prop:mainprop} proves (i). The proof of (ii) is below; the proof of (iii) is analogous.

The following is true for all $i \in [\ell]$ and all $k$. By \lemref{lem:unionPass} and the definition of $t_\f$, $L \in \Sigma^{\f}_{k} \cdot \L_{\Phi_i}(t_\f)$ iff $L \in \bigcup_{f \in \f} \Sigma^\f_{k} \cdot \L_{\Phi_i}(f)$ iff $L \in \Sigma^\f_k \cdot \L_{\Phi_i}(f')$ for some $f' \in \f$. By \lemref{lem:subsetlem}, there is then $g \in \f$ such that $\Sigma^\f_k \cdot \L_{\Phi_i}(f') \subseteq \Sigma_k\L_{\Phi_i}^\f(g) \subseteq \Sigma^\f_k \cdot \L_{\Phi_i}(g)$, which implies 
$$
\bigcup_{f \in \f} \Sigma^\f_{k} \cdot \L_{\Phi_i}(f) = \bigcup_{f \in \f} \Sigma_k\L_{\Phi_i}^\f(f).
$$
The result then follows from \lemref{lem:unionPass} and \propref{prop:mainprop}.
\end{proof}

\begin{lemma}
Let $\f$ be an acceptable collection, let $\{\Phi_i \mid i \in [\ell]\}$ be a finite collection of Blum measures, and let $t_\f$ be as in \propref{prop:mainprop}. For all $i \in [\ell]$, all $k$, and all witness criteria $\omega$:
\begin{enumerate}[(i)]
\item $\L_{\Phi_i}(t_\f) = \bigcup_{f \in \f}\L_{\Phi_i}(f)$,
\item $\Sigma^\f_{k} \cdot \L_{\Phi_i}(t_\f) = \Sigma_k\L_{\Phi_i}^\f(t_\f)$,
\item $\W^\f_{\omega} \cdot \L_{\Phi_i}(t_\f) = \W_\omega\L_{\Phi_i}^\f(t_\f)$. 
\end{enumerate}
\label{lem:mainlemtwo}
\end{lemma}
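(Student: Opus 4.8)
The plan is to prove (i), (ii), and (iii) in turn, with (iii) entirely analogous to (ii). Statement (i) is nothing more than \propref{prop:mainprop}(iii) restated, so there is nothing to do. For (ii) I would establish $\Sigma^\f_k \cdot \L_{\Phi_i}(t_\f) = \Sigma_k\L^\f_{\Phi_i}(t_\f)$ directly, as a pair of containments, rather than routing through the union $\bigcup_{f \in \f}\Sigma_k\L^\f_{\Phi_i}(f)$ of \lemref{lem:mainlemone}; as it happens, combining the two lemmas afterward yields that union identity for free, including its harder inclusion. The two containments are of very different character: one is immediate from $t_\f$ being non-decreasing, and the other is the substantive one, where \propref{prop:mainprop} does the real work.

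For the easy inclusion $\Sigma_k\L^\f_{\Phi_i}(t_\f) \subseteq \Sigma^\f_k \cdot \L_{\Phi_i}(t_\f)$, I would simply repeat the second-containment argument from the proof of \lemref{lem:subsetlem}, with $t_\f$ in place of the function $g \in \f$ used there. Given $L \in \Sigma_k\L^\f_{\Phi_i}(t_\f)$ witnessed by $f_1, \dots, f_k \in \f$ and $\varphi_e$ with $\Phi_i(e, \words{x, w_1, \dots, w_k}) \leq t_\f(|\words{x}|)$, I set $L' \defeq \{z \mid \varphi_e(z) = 1\}$. Since $t_\f$ is non-decreasing (\propref{prop:mainprop}(i)) and $|\words{x}| \leq |\words{x, y_1, \dots, y_k}|$ for the witnesses in play, $\Phi_i(e, \words{x, y_1, \dots, y_k}) \leq t_\f(|\words{x}|) \leq t_\f(|\words{x, y_1, \dots, y_k}|)$, so $L' \in \L_{\Phi_i}(t_\f)$ and $L'$ witnesses $L \in \Sigma^\f_k \cdot \L_{\Phi_i}(t_\f)$ via \eref{eq:operatoraction}. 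Crucially, this step uses only that $t_\f$ is non-decreasing, never that it belongs to $\f$.

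The substantive inclusion is $\Sigma^\f_k \cdot \L_{\Phi_i}(t_\f) \subseteq \Sigma_k\L^\f_{\Phi_i}(t_\f)$. I would take $L \in \Sigma^\f_k \cdot \L_{\Phi_i}(t_\f)$, witnessed by $f_1, \dots, f_k \in \f$ and some $L' \in \L_{\Phi_i}(t_\f)$. The difficulty is that \lemref{lem:subsetlem} cannot be applied to the budget $t_\f$, since it is stated only for budgets lying in $\f$ and $t_\f \notin \f$. To get around this I invoke \propref{prop:mainprop}(iii), namely $\L_{\Phi_i}(t_\f) = \bigcup_{f \in \f}\L_{\Phi_i}(f)$, so that $L' \in \L_{\Phi_i}(f')$ for some $f' \in \f$. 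Consequently $L \in \Sigma^\f_k \cdot \L_{\Phi_i}(f')$ with $f' \in \f$, and now \lemref{lem:subsetlem}(i) applies and furnishes $g \in \f$ with $\Sigma^\f_k \cdot \L_{\Phi_i}(f') \subseteq \Sigma_k\L^\f_{\Phi_i}(g)$, whence $L \in \Sigma_k\L^\f_{\Phi_i}(g)$. Finally, since $g \leq t_\f$ by \propref{prop:mainprop}(ii), a one-line monotonicity-in-the-budget argument (if $s \leq t$ then $\Sigma_k\L^\f_{\Phi_i}(s) \subseteq \Sigma_k\L^\f_{\Phi_i}(t)$, because condition (i) in the definition of $\Sigma_k\L^\f_{\Phi_i}$ only weakens) gives $L \in \Sigma_k\L^\f_{\Phi_i}(t_\f)$. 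Statement (iii) runs identically, replacing $\Sigma^\f_k$ by $\W^\f_\omega$, $\Sigma_k\L^\f_{\Phi_i}$ by $\W_\omega\L^\f_{\Phi_i}$, and invoking part (ii) of \lemref{lem:subsetlem} in place of part (i).

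The heart of the argument—and the only place where anything nontrivial happens—is the single replacement of the budget $t_\f$ by a function $f' \in \f$ at the level of the inner Blum language class, licensed by \propref{prop:mainprop}. This is precisely the maneuver that \lemref{lem:subsetlem} cannot perform on its own, and it is what lets the operators $\Sigma^\f_k$ and $\W^\f_\omega$ treat $t_\f$ as if it were a member of $\f$. I expect the remaining obstacles to be purely bookkeeping: making the budget-monotonicity of $\Sigma_k\L^\f_{\Phi_i}$ and $\W_\omega\L^\f_{\Phi_i}$ precise under the a.e.\ conventions (the exceptional $x$ with $|\words{x}|$ below the threshold where $g \leq t_\f$ form a finite set of $x$, hence are harmlessly absorbed), and confirming $|\words{x}| \leq |\words{x, y_1, \dots, y_k}|$ for the witness lengths involved, exactly as in \lemref{lem:subsetlem}. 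No new appeal to property (e) of the acceptable collection is needed beyond what \lemref{lem:subsetlem} already supplies.
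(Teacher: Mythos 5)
Your proof is correct and takes essentially the same route as the paper's: the easy inclusion $\Sigma_k\L^\f_{\Phi_i}(t_\f) \subseteq \Sigma^\f_k \cdot \L_{\Phi_i}(t_\f)$ is the identical argument from the non-decreasingness of $t_\f$, and your hard inclusion performs exactly the paper's maneuver (replace $t_\f$ by some $f' \in \f$ inside the Blum class via \propref{prop:mainprop}(iii), apply \lemref{lem:subsetlem} to land in $\Sigma_k\L^\f_{\Phi_i}(g)$, then use $g \leq t_\f$), merely inlined rather than routed through \lemref{lem:mainlemone}.
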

\begin{proof}
\propref{prop:mainprop} proves (i). The proof of (ii) is below; the proof of (iii) is analogous.

The following is true for all $i \in [\ell]$ and all $k$. First, let $L \in \Sigma^\f_k \cdot \L_{\Phi_i}(t_\f)$. Then, by \lemref{lem:mainlemone}, $L \in \Sigma_k\L_{\Phi_i}^\f(f')$ for some $f' \in \f$. But $f' \leq t_\f$ by \propref{prop:mainprop}, so $L \in \Sigma_k\L_{\Phi_i}^\f(t_\f)$. 

Now let $L \in \Sigma_k\L_{\Phi_i}(t_\f).$ Then, there exist $f_1, \dots, f_k \in \f$ and $\varphi_e \in \PPP(\Sigma^* \rightarrow \{0,1\})$ such that for all $x,w_1, \dots, w_k$, $\Phi_i(e, \words{x, w_1, \dots, w_k}) \leq t_\f(|\words{x}|)$ and
$$
x \in L \iff \left(\exists y_1 \in \Sigma^{f_1(|x|)}\right)\left(\forall y_2 \in \Sigma^{f_2(|x|)}\right)\dots\left(Q_k y_k \in \Sigma^{f_k(|x|)}\right) \varphi_e(\words{x, y_1, \dots, y_k}) = 1.
$$
Since $t_\f$ is non-decreasing (\propref{prop:mainprop}),
$$
\Phi_i(e, \words{x, y_1, \dots, y_k}) \leq t_\f(|\words{x}|) \leq t_\f(|\words{x, y_1, \dots, y_k}|).
$$ 
Therefore, $L' \defeq \{x \mid \varphi_e(x) = 1\} \in \L_{\Phi_i}(t_\f)$ and
$$
x \in L \iff \left(\exists y_1 \in \Sigma^{f_1(|x|)}\right)\left(\forall y_2 \in \Sigma^{f_2(|x|)}\right)\dots\left(Q_k y_k \in \Sigma^{f_k(|x|)}\right) \chi_{L'}(\words{x, y_1, \dots, y_k}) = 1.
$$
Thus, $L \in \Sigma_k^\f \cdot \L_{\Phi_i}(t_\f)$, as desired.
\end{proof}

Together Lemmas~\ref{lem:mainlemone} and \ref{lem:mainlemtwo} imply \thmref{thm:uniontheorem}.

\section*{Acknowledgements}

The authors thank Scott Aaronson, Lance Fortnow, Josh Grochow, and the anonymous referees at CCC'24 for substantive comments on an earlier draft of this paper.


\bibliographystyle{amsplain}
\bibliography{references}

\end{document}